\numberwithin{equation}{section}
\newtheorem{theorem}{Theorem}[section]
\newtheorem{corollary}[theorem]{Corollary}
\newtheorem{lemma}[theorem]{Lemma}
\theoremstyle{definition}
\newtheorem{assumption}[theorem]{Assumption}
\newcommand*{\rom}[1]{\expandafter\@slowromancap\romannumeral #1@}
\newcommand{\ind}{1\hspace{-2.1mm}{1}} 
\newcommand{\I}{\mathtt{i}}
\newcommand{\eps}{\varepsilon}
\newcommand{\EE}{\mathbb{E}}
\newcommand{\RR}{\mathbb{R}}
\newcommand{\QQ}{\mathbb{Q}}
\newcommand{\PP}{\mathbb{P}}
\newcommand{\RV}{\mathcal{R}}
\newcommand{\Crm}{\mathrm{C}}
\newcommand{\m}{\mathfrak{m}}
\newcommand{\A}{\mathfrak{a}}
\newcommand{\vm}{\mathfrak{v}_-}
\newcommand{\vp}{\mathfrak{v}_+}
\newcommand{\M}{\mathrm{M}}
\newcommand{\D}{\mathrm{d}}
\newcommand{\E}{\mathrm{e}}
\newcommand{\Drm}{\mathrm{D}}
\newcommand{\LDP}{\mathrm{LDP}}
\newcommand{\sgn}{\mathrm{sgn}}
\newcommand{\Vv}{\mathcal{V}}
\newcommand{\Oo}{\mathcal{O}}
\newcommand{\Dd}{\mathcal{D}}
\newcommand{\ccm}{\underline{\mathfrak{c}}}
\newcommand{\ccp}{\overline{\mathfrak{c}}}
\newcommand{\LLm}{\underline{\Lambda}^{*}}
\newcommand{\LLp}{\overline{\Lambda}^{*}}
\newcommand{\rrho}{\overline{\rho}}
\newcommand{\Ff}{\mathcal{F}}
\newcommand{\ggm}{\underline{\gamma}}
\newcommand{\ggp}{\overline{\gamma}}
\newcommand{\xhat}{X^{g}}
\newcommand{\lambhat}{\Lambda^{g}}
\title{Small-time moderate deviations for the randomised Heston model}
\date{\today}
\author{Antoine Jacquier, Fangwei Shi}
\address{Department of Mathematics, Imperial College London}
\email{a.jacquier@imperial.ac.uk, fangwei.shi12@imperial.ac.uk}
\keywords{Stochastic volatility, moderate deviations, Heston, implied volatility, asymptotic expansion.}
\subjclass[2010]{60F10, 91G20, 91B70}
\date{\today}
\begin{document}
\maketitle

\begin{abstract}
We extend previous large deviations results for the randomised Heston model to 
the case of moderate deviations.
The proofs involve the G\"artner-Ellis theorem and sharp large deviations tools. 
\end{abstract}

\section{Introduction}\label{sec:intro}
Classical stochastic volatility models are known to provide an overall good fit of option price data 
(or of the so-called implied volatility surface), except for short maturities;
in this particular region, adding jumps has historically provided a good patch, 
at the expense of complicated hedging, and more recently rough volatility models~\cite{Alos, BayerRough, Euch2, Fukasawa, gatheral14, guennoun14, JMMVix} have shown to out-perform while preserving continuity of the sample paths.
Building on the intuition that these refinements somehow capture a certain kind of uncertainty around the starting time of the process, 
a randomised version of the Heston model~\cite{Heston} was proposed in~\cite{JS16, mechkov15}, 
where the starting point of the variance process is considered random.
The authors showed there that this extra source of randomness generates the desired behaviour 
of implied volatility for small times. 
Mathematically, this was proved showing that the underlying stock price process satisfies 
some large deviations principles with specific rates of convergence.

Moderate deviations, although formally equivalent to large deviations, 
however usually provide more efficient ways (from a numerical point of view) to compute limiting probabilities.
Introduced in~\cite{Rubin}, they have become an increasingly useful tool
in probability and statistical Physics, as can be found in~\cite{Deo, Acosta, dembo96, Liming}.
They have also recently appeared in mathematical finance in order to provide a different, yet somehow more useful view on asymptotics, and important results in this direction can be studied in~\cite{BayerRough2, FrizGerholdPinter, JS18}.

This paper builds upon the large deviations results from~\cite{JS16} and provide their moderate deviations counterparts, in the context of small-time behaviour of the randomised Heston model;
contrary to large deviations, the moderate deviations rate functions are here available in closed form,
hence allowing for more efficient and quicker computations.
In passing, we provide (Theorems~\ref{thm: ldp_ThinTail_rescaled} and~\ref{thm:fat-tail-euro-call})
unusual examples of moderate deviations rate function which does not have a quadratic form.
We gather some technical results and background in the appendix.

\textbf{Notations}
Let~$\RR_+ := [0,\infty)$, $\RR_+^* := (0,\infty)$, and $\RR^* := \RR\setminus\{0\}$. 
For two functions~$f$ and~$g$ we write $f\sim g$ as~$x$ tends to $x_0$
if $\lim\limits_{x\rightarrow x_0} f(x)/g(x) =1$.
Finally, for a sequence $(Y_t)_{t\geq 0}$ satisfying a large deviations principle as~$t$ tends to zero
with speed~$g(t)$ and good rate function~$\Lambda$ we use the notation
$Y\sim\LDP(g(t), \Lambda)$.

\section{Model description}\label{sec:modelsetup}
On a given filtered probability space
$(\Omega, \Ff, (\Ff_t)_{t\geq 0}, \PP)$ supporting two independent Brownian motions~$W^{(1)}$
and~$W^{(2)}$, 
we consider the following dynamics for a log-stock price process $(X_t)_{t\geq 0}$:
\begin{equation}\label{eq:rHeston}
\begin{array}{rll}
\D X_{t} & = \displaystyle
 -\frac{1}{2}V_{t} \D t + \sqrt{V_{t}}\left(\rho\,\D W^{(1)}_t + \sqrt{1-\rho^2}\,\D W^{(2)}_t\right), 
 & X_{0} = 0,\\
\D V_{t} & = \displaystyle \kappa (\theta-V_{t})\D t + \xi \sqrt{V_{t}} \D W^{(1)}_{t}, 
& V_{0} = \Vv,
\end{array}
\end{equation}
with $\kappa, \theta, \xi>0$ and $\rho\in[-1,1]$.
This corresponds to the randomised version of the classical Heston model~\cite{Heston},
as recently proposed and analysed in~\cite{JS16, mechkov15}.
We assume that~$\Vv$ is a continuous random variable independent of the filtration~$(\Ff_t)_{t\geq 0}$,
and that the interior of its support reads 
$(\vm,\vp)\subseteq\RR_+^*$.
Further assume that its moment generating function 
$\M_{\Vv}(u):=\EE(\E^{u\Vv})$
is well defined on an open interval containing the origin, 
and denote $\m:=\sup\{u:\EE(\E^{u\Vv})<\infty\}$.
We shall distinguish three separate behaviours for the randomisation~$\Vv$:
bounded-support ($\vp<\infty$), thin-tail ($\m=\infty$, $\vp=\infty$), and fat-tail ($\m<\infty$, $\vp=\infty$).
Following~\cite{JS16}, we introduce the following assumptions characterising the thin-tail and fat-tail cases:

\begin{assumption}[Thin tails]\label{Assu:V0}
$\vp = \infty$ and~$\Vv$ admits a smooth density~$f$ with
$\log f(v)\sim -l_1 v^{l_2}$ as~$v$ tends to infinity, for some~$(l_1,l_2)\in \RR_+^*\times(1,\infty)$.
\end{assumption}
\begin{assumption}[Fat tails]\label{assu:fat-tail-2}
There exists $(\gamma_0,\gamma_1,\omega) \in \RR^*\times\RR\times\{1,2\}$, 
such that the following asymptotics hold for the cumulant generating function (cgf) of~$\Vv$ as~$u$ tends to~$\m$ from below:
\begin{equation}\label{eq:fat-tail assu2}
\log\M_\Vv(u) = 
\left\{
\begin{array}{ll}
\displaystyle \gamma_0\log(\m-u) + \gamma_1 + o(1), &\text{ for }\omega=1,\gamma_0<0,\\
\displaystyle \frac{\gamma_0}{\m-u} \left(1+ \gamma_1(\m-u)\log(\m-u) + \Oo(\m-u)\right), & \text{ for }\omega = 2,\gamma_0>0,
\end{array}
\right.
\end{equation}
and
\begin{equation}\label{eq:fat-tail assu3}
\frac{\M'_\Vv(u)}{\M_\Vv(u)} =
\left\{
\begin{array}{ll}
\displaystyle \frac{|\gamma_0|}{\m-u}\left(1 + o(1)\right), &\text{ for }\omega =1,\gamma_0<0,\\
\displaystyle \frac{\gamma_0 }{(\m-u)^2}
\left(1- \gamma_1(\m-u) + o\left(\m-u\right)\right), &\text{ for }\omega = 2,\gamma_0>0,
\end{array}
\right.
\end{equation}
\end{assumption}
Common continuous distributions fit into this framework, in particular
the uniform distribution (bounded support), the folded Gaussian distribution, 
the Gamma distribution (Assumption~\ref{assu:fat-tail-2} with $\omega=1$), 
and the noncentral Chi-squared (Assumption~\ref{assu:fat-tail-2} with $\omega=2$).

Before stating the main results of the paper, let us recall some information on the cumulant generating function
of~$X_t$, which will be essential for the rest of the analysis.
As proved in~\cite{albrecher07}, 
the moment generating function of~$X_t$ in the standard Heston model 
(where~$\Vv$ is a Dirac mass at $v_0>0$) 
admits the closed-form representation
$\M(t,u) = \exp\left(\Crm(t, u)+\Drm(t, u)v_0\right)$, for any $u\in\Dd_\M^t\subset\RR$,
where 
\begin{equation}\label{eq:HestonMGF}
\left\{
\begin{array}{rl}
\Crm(t,u) & := \displaystyle \frac{\kappa\theta}{\xi^{2}}
\left[(\kappa- \rho\xi u-d(u))t - 2\log\left(\frac{1-g(u)\E^{-d(u)t}}{1-g(u)}\right)\right],\\
\Drm(t,u) & := \displaystyle \frac{\kappa- \rho\xi u-d(u)}{\xi^{2}}\frac{1-\exp\left(-d(u)t\right)}{1-g(u)\exp\left(-d(u)t\right)},\\
d(u) & : = \displaystyle \left((\kappa - \rho\xi u)^{2}+\xi^{2} u (1-u)\right)^{1/2}
\qquad\text{and}\qquad
g(u) : = \frac{\kappa- \rho\xi u-d(u)}{\kappa-\rho\xi u +d(u)}.
\end{array}
\right.
\end{equation}
Introduce further the real numbers $u_-\leq 0$ and $u_+\geq 1$ and the function $\Lambda:(u_-, u_+)\to\RR$:
\begin{equation}\label{eq:lambda_H}
\begin{array}{rl}
& \left\{
\begin{array}{rl}
u_{-} &:= \displaystyle \frac{2}{\xi\rrho}\arctan\left(\frac{\rrho}{\rho}\right)\ind_{\{\rho<0\}}
- \frac{\pi}{\xi}\ind_{\{\rho=0\}}
+ \frac{2}{\xi\rrho}\left(\arctan\left(\frac{\rrho}{\rho}\right)-\pi\right)\ind_{\{\rho>0\}},\\
u_{+} &:= \displaystyle \frac{2}{\xi\rrho}\left(\arctan\left(\frac{\rrho}{\rho}\right)+\pi\right)\ind_{\{\rho<0\}}
+\frac{\pi}{\xi}\ind_{\{\rho=0\}}
+ \frac{2}{\xi\rrho}\arctan\left(\frac{\rrho}{\rho}\right)\ind_{\{\rho>0\}},
\end{array}
\right.
\\
& \displaystyle \Lambda(u) := \frac{u}{\xi(\rrho\mathrm{cot}\left(\xi\rrho u/2\right)-\rho)}.
\end{array}
\end{equation}
The pointwise limit of the (rescaled) cumulant generating function of~$X_t$ then reads~\cite{fordeJac}
$$
\lim_{t\downarrow 0}t\log\M\left(t,\frac{u}{t}\right) = \Lambda(u)v_0,
\qquad\text{for any }u\in (u_-,u_+),
$$
and the function~$\Lambda$ is well defined, smooth, strictly convex on $(u_-, u_+)$, and infinite elsewhere.

\section{Moderate deviations}\label{sec:mdp}
Moderate deviations classically arise as rescaled large deviations;
in our setting, they take the following form:
for $\alpha \ne 0$, define the process~$X^{(\alpha)}$ pathwise via $X^{(\alpha)}_t := t^{-\alpha}X_t$. 
Moderate deviations for the sequence~$(X_t)_{t\geq 0}$ as~$t$ tends to zero 
are equivalent to large deviations for~$(X_t^{(\alpha)})_{t\geq 0}$ and can,
in our framework, be derived from finite-dimensional tools using the G\"artner-Ellis theorem.
The assumptions on the behaviour of the randomisation~$\Vv$ yield 
different rate functions and speed for the moderate deviations regime, 
which we analyse sequentially below.

\subsection{Distribution with bounded support}\label{sec:bddsupp}
We first start with the case where the random initial distribution of~$\Vv$ has bounded support, 
in which case the following holds:
\begin{theorem}\label{thm: ldp_bdd_rescaled}
If~$\vp$ is finite then for any~$\gamma \in (0,1)$, 
$X^{(\alpha)}\sim\LDP\left(t^{\gamma}, \frac{x^2}{2\vp}\right)$ holds with $\alpha := \frac{1}{2}(1-\gamma)$.
\end{theorem}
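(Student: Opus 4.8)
The plan is to apply the Gärtner--Ellis theorem to the rescaled process $X^{(\alpha)}_t = t^{-\alpha}X_t$ with speed $t^\gamma$, which requires computing the limiting scaled cumulant generating function
\[
\Lambda^{(\alpha)}(u) := \lim_{t\downarrow 0} t^\gamma \log \EE\left[\exp\left(\frac{u}{t^\gamma}\,t^{-\alpha}X_t\right)\right]
= \lim_{t\downarrow 0} t^\gamma \log \EE\left[\exp\left(\frac{u}{t^{\gamma+\alpha}}\,X_t\right)\right].
\]
With $\alpha = \tfrac12(1-\gamma)$ we have $\gamma+\alpha = \tfrac12(1+\gamma)$, so the argument of the exponential is $u/t^{(1+\gamma)/2}$; the key point is that since $\gamma\in(0,1)$, this is $o(1/t)$, i.e.\ we are probing the cgf at a scale strictly smaller than the large-deviations scale $1/t$. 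First I would condition on the randomisation $\Vv$ and use the closed-form Heston mgf $\M(t,u)=\exp(\Crm(t,u)+\Drm(t,u)\Vv)$, writing
\[
t^\gamma\log\EE\left[\exp\left(\tfrac{u}{t^{(1+\gamma)/2}}X_t\right)\right]
= t^\gamma\,\Crm\!\left(t,\tfrac{u}{t^{(1+\gamma)/2}}\right) + t^\gamma\log\EE\left[\exp\left(\Drm\!\left(t,\tfrac{u}{t^{(1+\gamma)/2}}\right)\Vv\right)\right].
\]

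The second step is the asymptotic analysis of $\Crm$ and $\Drm$ as $t\downarrow 0$ evaluated at the moderate scaling $u/t^{(1+\gamma)/2}$. Writing $w:=u/t^{(1+\gamma)/2}\to\infty$, one expands $d(w)$ and $g(w)$ for large argument: $d(w)\sim \xi\sqrt{1-\rho^2}\,|w|$ grows linearly in $w$ and $g(w)$ tends to a constant, exactly as in the large-time expansions underlying~\eqref{eq:lambda_H}. Using $\Lambda(u)=\tfrac{u}{\xi(\rrho\cot(\xi\rrho u/2)-\rho)}$, the scaling relation for the standard Heston model gives $\Drm(t,u/t)\to \Lambda(u)$ and $t\Crm(t,u/t)\to 0$ (the latter because $\Crm$ is $O(1)$ while multiplied by $t$); in the moderate regime the relevant Taylor expansion of $\Lambda$ near the origin is $\Lambda(u)=\tfrac{u^2}{2}+O(u^4)$, so one expects $\Drm(t,u/t^{(1+\gamma)/2}) \sim \tfrac12\,(u/t^{(1+\gamma)/2})\cdot(u/t^{(1+\gamma)/2})\cdot t \cdot (\text{correction})$; carefully, $\Drm(t,w)\sim \tfrac12 w^2 t \cdot (1+o(1))$ when $wt\to 0$, which with $w=u/t^{(1+\gamma)/2}$ gives $\Drm \sim \tfrac12 u^2 t^{-\gamma}$, i.e.\ $t^\gamma \Drm(t,w)\to \tfrac{u^2}{2}$. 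Similarly $t^\gamma\Crm(t,w)\to 0$ since $\Crm(t,w)=O(w t + \log(\cdot)) = o(t^{-\gamma})$. This is the main technical obstacle: one must control these expansions uniformly enough and track the error terms, presumably by importing the small-time Heston estimates from~\cite{forde12, fordeJac} (as in~\cite{JS16}) and specialising them to the intermediate scale $t^{-(1+\gamma)/2}$.

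The third step handles the randomisation term $t^\gamma\log\EE[\exp(\Drm(t,w)\Vv)]$. Since $\Drm(t,w)\sim \tfrac12 u^2 t^{-\gamma}\to+\infty$ for $u\ne 0$ (and the sign: one checks $\Drm>0$ here), and $\Vv$ has bounded support with essential supremum $\vp<\infty$, a Laplace/Varadhan argument gives $\log\EE[\exp(\lambda\Vv)] \sim \lambda\vp$ as $\lambda\to+\infty$; hence $t^\gamma\log\EE[\exp(\Drm\Vv)] \sim t^\gamma\,\Drm\,\vp \to \tfrac{u^2}{2}\vp$. Combining, $\Lambda^{(\alpha)}(u) = \tfrac{\vp}{2}u^2$ for all $u\in\RR$, which is finite, differentiable, and steep everywhere, so Gärtner--Ellis applies and yields the large deviations principle for $X^{(\alpha)}$ at speed $t^\gamma$ with rate function the Legendre transform
\[
\left(\tfrac{\vp}{2}u^2\right)^*(x) = \sup_{u\in\RR}\left(ux - \tfrac{\vp}{2}u^2\right) = \frac{x^2}{2\vp},
\]
as claimed. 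I would close by noting that the case $u=0$ is trivial and that the essential-supremum asymptotic for $\log\M_\Vv$ is elementary for bounded-support distributions (one-line sandwich between $\lambda(\vp-\eps)+\log\PP(\Vv>\vp-\eps)$ and $\lambda\vp$).
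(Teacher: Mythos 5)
Your proposal is correct and follows essentially the same route as the paper: the tower-property decomposition $\M(t,u)=\E^{\Crm(t,u)}\M_\Vv(\Drm(t,u))$, the small-time expansions of $\Crm$ and $\Drm$ at the intermediate scale $u/t^{\gamma+\alpha}$ (the paper's Lemma~\ref{lemma:CDasymp}) yielding $t^\gamma\Drm\to u^2/2$, the sandwich argument giving $\log\M_\Vv(\lambda)\sim\lambda\vp$ for bounded support, and the G\"artner--Ellis theorem with the Legendre transform $x^2/(2\vp)$. (Only a cosmetic slip: for large real $w$ the quantity $d(w)$ is purely imaginary, $d(w)\sim\I\xi\sqrt{1-\rho^2}\,|w|$, rather than real as written, but this does not affect the argument.)
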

Since $\vp$ is finite, $\m$ is infinite.
One of the striking feature of moderate deviations is that, contrary to classical large deviations,
the rate function is usually available analytically, and often of quadratic form~\cite{FrizGerholdPinter, guillin02, guillin04}.
\begin{proof}
Let $\alpha, \gamma \in (0,1)$.
Notice that
\begin{equation}\label{eq:TowerProperty}
\M(t,u) := \EE \left(\E^{uX_t} \right)
= \EE\left(\EE\left(\E^{uX_t}|\Vv\right)\right)
= \EE\left(\E^{\Crm(t, u) + \Drm(t, u)\Vv}\right)
= \E^{\Crm(t, u)} \M_{\Vv}\left(\Drm(t, u)\right),
\end{equation}
where the functions $\Crm$ and $\Drm$ are the components of the moment generating function 
of the standard Heston model in~\eqref{eq:HestonMGF}.
Then for any $t>0$, the rescaled cumulant generating function of $X_t^{(\alpha)}$ reads
$$
\Lambda^{(\alpha)}_\gamma\left(t,\frac{u}{t^{\gamma}}\right) 
:= t^\gamma\log\EE\left[\exp\left(\frac{uX_t^{(\alpha)}}{t^\gamma}\right)\right]
= t^\gamma\log\EE\left[\exp\left(\frac{uX_t}{t^{\gamma+\alpha}}\right)\right]
= t^\gamma \Crm\left(t,\frac{u}{t^{\gamma+\alpha}}\right) 
+ t^\gamma\log\M_{\Vv}\left(\Drm\left(t,\frac{u}{t^{\gamma+\alpha}}\right)\right),
$$
for all~$u\in\RR$ such that the left-hand side exists.
Lemma~\ref{lemma:CDasymp} implies that $(\gamma+\alpha)$ 
has to be less than one in order to obtain a non-trivial behaviour. 
Let us first prove the following claim: 
for $\vp<\infty$, $\lim\limits_{u\uparrow \infty}u^{-1}\log\M_{\Vv}(u) = \vp$.
If~$F_{\Vv}$ denotes the cumulative distribution function of~$\Vv$, then
$$
\M_{\Vv}(u) = \EE\left(\E^{u\Vv}\right)\leq \exp(u\vp)\int_{[\vm, \vp]} F_{\Vv}(\D v) = \exp(u\vp).
$$
For any small $\eps>0$, fix $\delta \in(0,\eps\vp/2)$, so that
$$
\frac{\log\M_{\Vv}(u)}{u\vp}
\geq \frac{1}{u\vp}\log\left(\int_{\vp-\delta}^{\vp}\E^{uv}F_{\Vv}(\D v)\right)
\geq \frac{1}{u\vp}\log\left(\E^{u(\vp-\delta)}\PP\left(\Vv \geq \vp-\delta\right)\right)
= 1-\frac{\delta}{\vp} + \frac{\log\PP\left(\Vv \geq \vp-\delta\right)}{u\vp},
$$
since~$\vp$ is the upper bound of the support,
therefore $\PP(\Vv\geq \vp - \delta)$ is strictly positive, and the claim follows.
From this claim, as~$t$ tends to zero, we deduce the asymptotic behaviour
\begin{equation*}
\Lambda^{(\alpha)}_\gamma\left(t,\frac{u}{t^\gamma}\right) = 
\left\{\begin{array}{lll}
 \displaystyle \Oo\left(t^\gamma\right) + \vp\Lambda(u)t^{\gamma-1},
 & \text{if }\gamma+\alpha = 1,
 & \text{for all }u\in(u_-,u_+),\\
\displaystyle o\left(t^\gamma\right) + \frac{\vp}{2}u^2t^{1-\gamma-2\alpha},
 & \text{if }\gamma+\alpha<1,
 & \text{for all }u\in\RR.
\end{array}
\right.\end{equation*}
Since $\alpha\ne 0$, the non-degenerate result is obtained if and only if
$1-\gamma-2\alpha=0$, i.e.~$\alpha = \frac{1-\gamma}{2}$, 
and the proof follows from the G\"artner-Ellis theorem~\cite[Theorem~2.3.6]{dembo92}.
\end{proof}
\subsection{Thin-tail distribution}\label{sec:thin-tail}
With $l_1,l_2$ given in Assumption~\ref{Assu:V0},
we introduce the following two special rates of convergence
$\frac{1}{2}<\ggm<1<\ggp$, 
and two positive constants~$\ccm$,~$\ccp$:
\begin{equation}\label{eq:SpecialRates}
\ggm := \frac{l_2}{1+l_2},
\qquad\qquad
\ggp := \frac{l_2}{l_2-1},
\qquad\qquad
\ccm := (2l_1l_2)^{\frac{1}{1+l_2}},
\qquad\qquad
\ccp := (2l_1l_2)^{\frac{1}{1-l_2}},
\end{equation}
and define the function $\LLm: \RR\to\RR_+$ by
\begin{equation}\label{eq:I}
\LLm(x) := \frac{\ccm}{2\ggm} x^{2\ggm},\qquad\text{for any }x\text{ in }\RR.
\end{equation}
Introduce further
$$
\LLp(x) :=
\sup_{u\in(u_-,u_+)}\left\{ux - \frac{\ccp}{\ggp}
2^{\ggp-1}\Lambda(u)^{\ggp}\right\},
\quad\text{for all } x\in\RR.
$$
with $\Lambda$ and~$u_{\pm}$ in~\eqref{eq:lambda_H}.
The moderate deviations principle then takes the following form:
\begin{theorem}\label{thm: ldp_ThinTail_rescaled}
Under Assumption~\ref{Assu:V0}, the following statements hold as $t$ tends to zero:
\begin{enumerate}[(i)]
\item for any~$\gamma \in (0,\overline{\gamma})$,
$X^{(\alpha)} \sim \LDP(t^\gamma, \LLm)$ with 
$\alpha = \frac{1}{2}(1 - \gamma / \ggm)$;
\item if~$\gamma = \overline{\gamma}$, then $X^{(\alpha)}\sim\LDP(t^{\overline{\gamma}}, \LLp)$ 
with $\alpha = 1-\overline{\gamma}$.
\end{enumerate}
\end{theorem}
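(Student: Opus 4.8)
The plan is to apply the G\"artner--Ellis theorem to the rescaled process $X^{(\alpha)}$, exactly as in Theorem~\ref{thm: ldp_bdd_rescaled}; the only genuinely new input is the precise right-tail behaviour of $\M_\Vv$ forced by Assumption~\ref{Assu:V0}. Write $\beta:=\gamma+\alpha$. Combining the tower property~\eqref{eq:TowerProperty} with~\eqref{eq:HestonMGF} gives, for every $t>0$ and every $u$ in the (finite, but as $t\downarrow 0$ increasing to $\RR$ when $\beta<1$ and to $(u_-,u_+)$ when $\beta=1$) domain,
$$
\Lambda^{(\alpha)}_\gamma\left(t,\frac{u}{t^\gamma}\right)
= t^\gamma\,\Crm\left(t,\frac{u}{t^\beta}\right) + t^\gamma\log\M_\Vv\left(\Drm\left(t,\frac{u}{t^\beta}\right)\right),
$$
so the task reduces to controlling $\Crm(t,u/t^\beta)$, $\Drm(t,u/t^\beta)$ and the large-argument behaviour of $\log\M_\Vv$.

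The key estimate is the following: under Assumption~\ref{Assu:V0}, Kasahara's exponential Tauberian theorem — $l_2$ being the conjugate exponent of $\ggp=l_2/(l_2-1)$ — or, equivalently, a Laplace evaluation of $\int\E^{wv}f(v)\D v$ around the maximiser $v^\star=(w/(l_1l_2))^{1/(l_2-1)}$, yields
$$
\log\M_\Vv(w)\ \sim\ \frac{\ccp}{\ggp}\,2^{\ggp-1}\,w^{\ggp}
\qquad\text{as }w\to+\infty,
$$
while $\log\M_\Vv(w)=\Oo(|w|)=o(|w|^{\ggp})$ as $w\to-\infty$ since $\Vv>0$; the normalising constant is exactly the Legendre transform of $v\mapsto l_1v^{l_2}$, namely $\frac{\ccp}{\ggp}2^{\ggp-1}=\frac{l_2-1}{l_2}(l_1l_2)^{-1/(l_2-1)}$. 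On the other side, Lemma~\ref{lemma:CDasymp} provides the small-$t$ behaviour of the Heston coefficients: for $\beta<1$ one has $\Drm(t,u/t^\beta)=\frac12 u^2\,t^{1-2\beta}(1+o(1))$ and $\Crm(t,u/t^\beta)=o(1)$ for every $u\in\RR$; for $\beta=1$ one has $\Drm(t,u/t)=\Lambda(u)\,t^{-1}(1+o(1))$ and $t\,\Crm(t,u/t)\to0$ for every $u\in(u_-,u_+)$, with $\M(t,u/t)=+\infty$ eventually for $u\notin[u_-,u_+]$.

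For part (i), fix $\gamma\in(0,\ggp)$ and take $\alpha=\frac12(1-\gamma/\ggm)$. The elementary identity $\ggp(2\ggm-1)=\ggm$ (equivalently $\ggm^{-1}=2-\ggp^{-1}$, checked directly from~\eqref{eq:SpecialRates}) shows that $\beta=\gamma+\alpha=\frac12(1+\gamma/\ggp)\in(\frac12,1)$, hence $1-2\beta=-\gamma/\ggp<0$ and $\Drm(t,u/t^\beta)\to+\infty$. Inserting the tail estimate and Lemma~\ref{lemma:CDasymp} into the displayed identity, and using that $\gamma+(1-2\beta)\ggp=0$, the rescaled cumulant generating function converges pointwise:
$$
\Lambda_0(u):=\lim_{t\downarrow0}\Lambda^{(\alpha)}_\gamma\left(t,\frac{u}{t^\gamma}\right)
=\frac{\ccp}{\ggp}2^{\ggp-1}\Bigl(\frac{u^2}{2}\Bigr)^{\ggp}
=\frac{\ccp}{2\ggp}|u|^{2\ggp},\qquad u\in\RR.
$$
This $\Lambda_0$ is finite, convex and $C^1$ on $\RR$ (as $2\ggp>2$) with $0$ interior to its domain, so~\cite[Theorem~2.3.6]{dembo92} gives $X^{(\alpha)}\sim\LDP(t^\gamma,\Lambda_0^*)$; a one-line Legendre computation — using $2\ggp/(2\ggp-1)=2\ggm$ and $\ccp^{\,2\ggm-1}=\ccm^{-1}$, both immediate from~\eqref{eq:SpecialRates} — identifies $\Lambda_0^*=\LLm$.

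For part (ii), $\gamma=\ggp$ and $\alpha=1-\ggp$ give $\beta=1$. Since $\Lambda$ is convex with $\Lambda(0)=\Lambda'(0)=0$ it is nonnegative on $(u_-,u_+)$, so $\Lambda(u)^{\ggp}$ is well defined and, by Lemma~\ref{lemma:CDasymp} and the tail estimate, $t^{\ggp}\log\M_\Vv(\Drm(t,u/t))\to\frac{\ccp}{\ggp}2^{\ggp-1}\Lambda(u)^{\ggp}$ for $u\in(u_-,u_+)$, whereas $t^{\ggp}\Crm(t,u/t)=t^{\ggp-1}\bigl(t\,\Crm(t,u/t)\bigr)\to0$ because $\ggp>1$, and the left-hand side is eventually $+\infty$ for $u\notin[u_-,u_+]$. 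Thus $\Lambda_0=\frac{\ccp}{\ggp}2^{\ggp-1}\Lambda^{\ggp}$ on $(u_-,u_+)$ and $+\infty$ elsewhere; it is convex (an increasing convex function of the nonnegative convex $\Lambda$), lower semicontinuous, with $0$ interior to its domain, and essentially smooth, since $\Lambda$ and $\Lambda'$ blow up at $u_\pm$ while $\ggp-1>0$, so $|\Lambda_0'(u)|=\ccp\,2^{\ggp-1}\Lambda(u)^{\ggp-1}|\Lambda'(u)|\to\infty$ there. G\"artner--Ellis then yields $X^{(\alpha)}\sim\LDP(t^{\ggp},\Lambda_0^*)$ with $\Lambda_0^*=\LLp$ by the very definition of the latter. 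I expect the main obstacle to be the Tauberian passage from the density asymptotics of Assumption~\ref{Assu:V0} to the matching mgf asymptotics above, together with, in part (ii), checking the steepness hypothesis of G\"artner--Ellis at $u_\pm$ and that the cumulant generating function genuinely explodes outside $[u_-,u_+]$.
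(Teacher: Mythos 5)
Your proposal is correct and follows essentially the same route as the paper: the tower-property decomposition of the cumulant generating function, the small-$t$ asymptotics of $\Crm$ and $\Drm$ from Lemma~\ref{lemma:CDasymp}, the Tauberian/Laplace tail estimate $\log\M_\Vv(w)\sim\frac{\ccp}{\ggp}2^{\ggp-1}w^{\ggp}$ (the paper packages this as Lemma~\ref{lemma: log-mgf-asymptotics} via Bingham and Kasahara), and the G\"artner--Ellis theorem, with the exponent-matching giving the same $\alpha$ in each regime. You additionally spell out the Legendre-transform identification $\Lambda_0^*=\LLm$ and the behaviour outside $[u_-,u_+]$, which the paper leaves implicit, but these are refinements rather than a different argument.
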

Let us first state and prove the following short technical lemma.
Recall~\cite{bingham89} that, for $\A>0$, a function~$f: (\A,\infty)\to\RR_+^*$  is said to be regularly varying with index~$l\in\RR$ (and we write $f\in\RV_l$)
if
$\lim\limits_{x\uparrow \infty}f(\lambda x)/f(x) = \lambda^l$, for any~$\lambda>0$.
When $l=0$, the function is called slowly varying.

\begin{lemma}\label{lemma: log-mgf-asymptotics}
If $|\log f|\in \RV_l$ ($l>1$), then
$
\log\M_{\Vv}(z) \sim \left(l-1\right) \left(\frac{z}{l}\right)^{\frac{l}{l-1}}\psi(z)$
at infinity,
with~$\psi\in\RV_0$ defined as 
$$
\psi(z) := \left(\frac{z}{|\log f|^\leftarrow(z)}\right)^\leftarrow z^{\frac{l}{1-l}},
$$
where $f^\leftarrow(x):=\inf\{y: f(y)>x\}$ defines the generalised inverse.
\end{lemma}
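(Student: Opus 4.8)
The plan is to combine a Laplace-type estimate for $\M_\Vv(z)=\int_0^\infty\E^{zv}f(v)\,\D v$ with an explicit evaluation of the associated Legendre-type transform. Throughout, write $\phi:=|\log f|\in\RV_l$, so that $\log f(v)\sim-\phi(v)$ as $v\uparrow\infty$. I would first pass from this density estimate to the tail estimate $-\log\PP(\Vv>v)\sim\phi(v)$. The lower bound follows from $\PP(\Vv>v)\ge\int_v^{2v}f\ge v\inf_{u\in[v,2v]}f(u)$ together with the uniform convergence (Potter) bounds for the regularly varying function $\phi$; for the upper bound one writes $f(u)\le\E^{-(1-\eps)\phi(u)}$ for $u$ large, so that $\PP(\Vv>v)\le\int_v^\infty\E^{-(1-\eps)\phi(u)}\,\D u$, and a Karamata-type estimate (using $l>1$) gives that the logarithm of the right-hand side is $\sim-(1-\eps)\phi(v)$; letting $\eps\downarrow0$ yields the claim.

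With the tail asymptotics in hand, the second step is Kasahara's exponential Tauberian theorem~\cite[Theorem~4.12.7]{bingham89}: since $-\log\PP(\Vv>v)\sim\phi(v)$ with $\phi\in\RV_l$, $l>1$, it gives $\log\M_\Vv(z)\sim\psi_0(z)$ with $\psi_0\in\RV_{l/(l-1)}$ the conjugate function of $\phi$; in particular $\M_\Vv(z)<\infty$ for all $z$, consistently with $\m=\infty$. Equivalently, and this is the route I would actually write out, $\psi_0(z)\sim\sup_{v>0}\{zv-\phi(v)\}$ (the supremum is finite and attained since $\phi(v)/v\to\infty$), the behaviour of the supremum being controlled directly by Potter bounds so that no differentiation of $\phi$ is needed.

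The third step turns $\sup_{v>0}\{zv-\phi(v)\}$ into the stated closed form. Put $\eta(v):=\phi(v)/v\in\RV_{l-1}$; since $l>1$, $\eta$ increases to infinity (up to asymptotic equivalence) and has an asymptotic inverse $\eta^\leftarrow\in\RV_{1/(l-1)}$. Bracketing the supremum shows it is attained near the point $v^*(z)$ with $\eta(v^*(z))\sim z/l$, whence $v^*(z)\sim\eta^\leftarrow(z/l)$ and, since $\phi(v^*(z))=v^*(z)\,\eta(v^*(z))\sim z\,v^*(z)/l$,
\[
\sup_{v>0}\{zv-\phi(v)\}\ \sim\ \frac{l-1}{l}\,z\,\eta^\leftarrow(z/l)\ \sim\ (l-1)\,l^{-l/(l-1)}\,z\,\eta^\leftarrow(z),
\]
the last equivalence using $\eta^\leftarrow(z/l)\sim l^{-1/(l-1)}\eta^\leftarrow(z)$. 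Finally, from the definitions of the generalised inverses one has $z\,\eta^\leftarrow(z)=\phi(\eta^\leftarrow(z))=\big(z/\phi^\leftarrow(z)\big)^\leftarrow(z)$, and $(l-1)\,l^{-l/(l-1)}=(l-1)(z/l)^{l/(l-1)}z^{-l/(l-1)}$, so that the right-hand side above is exactly $(l-1)(z/l)^{l/(l-1)}\psi(z)$ with $\psi$ as in the statement; that $\psi\in\RV_0$ is immediate on composing the indices $\tfrac1l\mapsto\tfrac{l-1}{l}\mapsto\tfrac{l}{l-1}$ and multiplying by $z^{l/(1-l)}$.

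The main obstacle is the uniform control of the slowly varying corrections: in the first step the density-to-tail passage relies on Potter bounds for $\phi$ and on the Karamata integral estimate, and in the third step one must justify that $\sup_{v>0}\{zv-\phi(v)\}$ behaves like the value at the heuristic critical point without differentiating $\phi$ — smoothness of $f$ would let one invoke the monotone density theorem to get $\phi'(v)\sim l\phi(v)/v$, but it is cleaner to bracket the supremum using Potter bounds on $\phi$ on a neighbourhood of $\eta^\leftarrow(z/l)$. Everything else — the Tauberian input and the manipulation of the generalised inverses — is then routine once Kasahara's theorem is matched to the present normalisation.
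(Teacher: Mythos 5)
Your proposal is correct and follows essentially the same route as the paper: pass from the density asymptotics to the tail asymptotics $-\log\PP(\Vv>v)\sim|\log f|(v)$ (the paper quotes Bingham's Lemma, \cite[Theorem~4.12.10]{bingham89}, where you re-derive it by hand), then apply Kasahara's Tauberian theorem \cite[Theorem~4.12.7]{bingham89}. Your third step merely unpacks the matching of Kasahara's conjugate function to the stated closed form, which the paper leaves implicit, and your computation of the constants and of the generalised inverses is consistent (it reproduces, e.g., the constant $\ccp/(2\ggp)$ used later under Assumption~\ref{Assu:V0}).
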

\begin{proof}
Since $|\log f| \in \RV_l$, 
Bingham's Lemma~\cite[Theorem~4.12.10]{bingham89} implies 
$\log\PP(\Vv\geq x) = \log\int_{x}^{\infty}\E^{\log f(y)}\D y \sim \log f(x)$,
as~$x$ tends to infinity,
and the result follows from Kasahara's Tauberian theorem~\cite[Theorem~4.12.7]{bingham89}.
\end{proof}

\begin{proof}[Proof of Theorem~\ref{thm: ldp_ThinTail_rescaled}]
By Lemma~\ref{lemma: log-mgf-asymptotics}, if~$\gamma+\alpha<1$, then 
$$
\Lambda^{(\alpha)}_\gamma\left(t,\frac{u}{t^\gamma}\right)
\sim
\frac{\ccp}{2\ggp}u^{2\ggp} 
t^{\gamma + [1-2(\alpha+\gamma)]\ggp},\quad\text{as }t \text{ tends to zero,
for all }u\in \RR.
$$
The only non-degenerate result is obtained when~$\alpha = \frac{1}{2}(1 - \gamma/\ggm)$, 
and the requirement that $\gamma+\alpha<1$ implies that $\gamma< \ggp$. 
The rest follows directly from the G\"artner-Ellis theorem.
If~$\gamma+\alpha = 1$, then 
$$
\Lambda^{(\alpha)}_\gamma\left(t,\frac{u}{t^\gamma}\right)
\sim \frac{\ccp}{\ggp}2^{\ggp-1}\Lambda(u)^{\ggp}
t^{\gamma-\ggp},\quad\text{as }t\text{ tends to zero, for all }u\in (u_-, u_+),
$$
which imposes $\gamma =\ggp$. 
Define now the function
$f(u) := \ggp^{-1}\ccp 2^{\ggp-1}\Lambda(u)^{\ggp}$ on~$(u_-, u_+)$;
then
$$
f'(u) = 2^{\ggp-1}\ccp\Lambda(u)^{\ggp-1}\Lambda'(u)
\qquad \text{and}\qquad 
f''(u) = 2^{\ggp-1} \ccp
\left[(\ggp-1)\Lambda'(u)^2\Lambda(u)^{\ggp-2} + \Lambda(u)^{\ggp-1}\Lambda''(u)\right].
$$
Since~$\ggp>1$, and since~$\Lambda$ is strictly convex and tends to infinity at~$u_\pm$,
then so does~$f$. 
Consequently, for any~$x\in\RR$ the equation~$x=f'(u)$ admits a unique solution in~$(u_-,u_+)$,
hence the function~$\LLp$ is well defined on~$\RR$ and is a good rate function. 
The large deviations principle follows from the G\"artner-Ellis theorem.
\end{proof}
In a mathematical finance context, the case~$\gamma<\ggm$ belongs to the so-called regime of moderately 
out-of-the-money~\cite{FrizGerholdPinter, mijatovic12},
with time-dependent log-strike~$x_t = xt^\alpha$, for~$x\in\RR_+^*$ and $\alpha\in(0,1/2)$. 
In a thin-tail randomised environment, 
the rescaled limiting cgf does not satisfy~\cite[Assumption~6.1]{FrizGerholdPinter} 
in which the limit is assumed to have a quadratic form.
Moreover, Theorem~\ref{thm: ldp_ThinTail_rescaled} implies that for the original process~$(X_t)_{t\geq 0}$,
\begin{equation}\label{eq:MOTM}
\PP\left(X_t\geq x_t\right) 
=\PP\left(X_t^{(\alpha)} \geq x\right)
= \exp\left(-\frac{\LLm(x)}{t^\gamma}(1+o(1))\right),\quad\text{as } t\text{ tends to zero.}
\end{equation}
Tail probabilities translate naturally to asymptotic behaviours of the implied volatility, 
denoted by~$\sigma_t(x)$, for given maturity~$t$ and log-strike~$x$.
The following corollary makes this statement precise:
\begin{corollary}\label{impvol:MOTM}
Consider the following two regimes:
\begin{itemize}
\item Moderately out-of-the-money (MOTM): $(\alpha,x) \in (0,1/2)\times\RR^*$;
\item Small time and large strike: $(\alpha,x) \in (1-\ggp,0)\times\RR^*$.
\end{itemize}
Under Assumption~\ref{Assu:V0}, let~$x_t:=xt^\alpha$, 
and~$\widehat{\gamma} := (1-2\alpha)(1-\ggm)>0$. 
Then 
$\displaystyle 
\lim_{t\downarrow 0} t^{\widehat{\gamma}}\sigma_t^2(x_t) = \ccm^{-1}\ggm x^{2(1-\ggm)}$.
\end{corollary}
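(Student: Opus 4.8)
The plan is to extract a sharp out-of-the-money tail estimate from Theorem~\ref{thm: ldp_ThinTail_rescaled}(i), transfer it to vanilla option prices, and then invert the Black--Scholes formula.

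\emph{Step~1 (speed and tail estimate).} Set $\gamma:=\ggm(1-2\alpha)$. From~\eqref{eq:SpecialRates} one checks that the (decreasing, affine in $\alpha$) map $\alpha\mapsto\gamma$ sends $(0,1/2)$ onto $(0,\ggm)$ and $(1-\ggp,0)$ onto $(\ggm,\ggp)$, with inverse $\alpha=\tfrac12(1-\gamma/\ggm)$; in particular $\gamma\in(0,\ggp)$ in both regimes, so Theorem~\ref{thm: ldp_ThinTail_rescaled}(i) gives $X^{(\alpha)}\sim\LDP(t^\gamma,\LLm)$. Since, by~\eqref{eq:I}, $\LLm$ is a good, convex and even rate function (reading $x^{2\ggm}$ as $|x|^{2\ggm}$), its large deviations bounds specialise, exactly as in~\eqref{eq:MOTM}, to
\[
\lim_{t\downarrow0}t^\gamma\log\PP\bigl(\sgn(x)\,X_t\geq|x_t|\bigr)=-\LLm(x),\qquad x_t:=xt^\alpha .
\]

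\emph{Step~2 (from tails to option prices).} Let $V_t(x_t)$ be the undiscounted price of the out-of-the-money vanilla struck at $\E^{x_t}$ --- a call when $x>0$, a put when $x<0$. For a call, $V_t(x_t)=\int_{x_t}^{\infty}\E^{y}\,\PP(X_t\geq y)\,\D y$ (and similarly for a put), and I would run a Laplace-type estimate on this integral: it concentrates near $y=x_t$, the exponential weight contributes only $\Oo(|x_t|)=\Oo(t^\alpha)=o(t^{-\gamma})$ to the logarithm since $\alpha+\gamma>0$ in both regimes, and the far tail is controlled by $\M(t,1)=\EE(\E^{X_t})<\infty$ for small $t$. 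This upgrades Step~1 to $\lim_{t\downarrow0}t^\gamma\log V_t(x_t)=-\LLm(x)$.

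\emph{Step~3 (Black--Scholes inversion and conclusion).} By definition of implied volatility, $V_t(x_t)=C_{\BS}\bigl(x_t,\,t\,\sigma_t^2(x_t)\bigr)$ with $C_{\BS}(k,v)$ the corresponding out-of-the-money Black--Scholes price. The classical extreme-strike expansion $-\log C_{\BS}(k,v)=\tfrac{k^2}{2v}(1+o(1))$, valid as $k^2/v\to\infty$, combined with Step~2 gives $\dfrac{x_t^2}{2\,t\,\sigma_t^2(x_t)}\sim\LLm(x)\,t^{-\gamma}$, i.e.\ $t\,\sigma_t^2(x_t)\sim\dfrac{x^2\,t^{2\alpha+\gamma}}{2\,\LLm(x)}$. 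Dividing by $t$ and using $1-2\alpha-\gamma=(1-2\alpha)(1-\ggm)=\widehat{\gamma}$ together with $\LLm(x)=\tfrac{\ccm}{2\ggm}x^{2\ggm}$, so that $\dfrac{x^2}{2\,\LLm(x)}=\ccm^{-1}\ggm\,x^{2(1-\ggm)}$, yields the claim.

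\emph{Main obstacle.} The delicate step is the Black--Scholes inversion: the expansion above is only licit once $x_t^2/(t\,\sigma_t^2(x_t))\to\infty$, which is part of what is being proved. I would break this loop by a two-sided bootstrap --- elementary monotonicity bounds on $C_{\BS}(x_t,\cdot)$ at two auxiliary total variances first force $0<\liminf_{t\downarrow0}t^{\widehat{\gamma}}\sigma_t^2(x_t)\leq\limsup_{t\downarrow0}t^{\widehat{\gamma}}\sigma_t^2(x_t)<\infty$, which already places us in the extreme-strike regime where Step~3's sharp estimate then pins the limit. Alternatively one may quote a ready-made transfer lemma (in the spirit of~\cite{FrizGerholdPinter}) converting the exponential decay rate of out-of-the-money prices directly into implied total variance, bypassing the explicit inversion.
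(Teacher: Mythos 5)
Your proposal is correct and lands on exactly the right arithmetic, but it takes a genuinely longer route than the paper. The paper goes straight from your Step~1 (which is precisely its Equation~\eqref{eq:MOTM}, i.e.\ $-\log\PP(X_t\geq x_t)\sim t^{-\gamma}\LLm(x)$ with $\gamma=\ggm(1-2\alpha)$) to the implied volatility by invoking the general smile asymptotics of Caravenna--Corbetta \cite[Theorem~2.3]{caravenna15}, after checking their Hypothesis~2.2 for the sequence $(t^\gamma,x_t)$; that theorem is formulated directly in terms of tail probabilities, so the paper never passes through option prices and never meets the circularity you flag in Step~3. Your Steps~2 and~3 reconstruct by hand what that citation delivers: the tails-to-prices transfer via $\int_{x_t}^\infty \E^{y}\,\PP(X_t\geq y)\,\D y$ and the extreme-strike Black--Scholes inversion with a bootstrap. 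Both are standard and your sketch is sound --- the identities $1-2\alpha-\gamma=(1-2\alpha)(1-\ggm)=\widehat{\gamma}$ and $x^2/(2\LLm(x))=\ccm^{-1}\ggm x^{2(1-\ggm)}$ are exactly how the paper's limit emerges, and your observation that $\alpha\mapsto\gamma$ maps the two regimes onto $(0,\ggm)$ and $(\ggm,\ggp)$ respectively, so that Theorem~\ref{thm: ldp_ThinTail_rescaled}(i) applies throughout, is correct --- but to make it airtight you would still need to control the far tail of the price integral uniformly in small $t$ (e.g.\ via $\EE(\E^{pX_t})<\infty$ for some $p>1$, which holds since $\Drm(t,p)\to 0$ and $\M_\Vv$ is finite near the origin) and to write out the bootstrap. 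The ready-made transfer lemma you mention as an alternative is precisely the paper's choice, with~\cite{caravenna15} in place of~\cite{FrizGerholdPinter}; what it buys is brevity and the elimination of the circularity issue at the source. One point in your favour: you treat both signs of $x$ and both ranges of $\alpha$ simultaneously and note that $\LLm$ must be read with $|x|^{2\ggm}$, whereas the paper only writes out the case $x>0$.
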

\begin{proof}
We only prove the case $x>0$, the other cases being analogous.
For~$\gamma := \ggm(1-2\alpha)>0$,
Equation~\eqref{eq:MOTM} implies that as~$t$ tends to zero,
$-\log\PP\left(X_t\geq x_t\right)\sim t^{-\gamma}\LLm(x)$.
It is easy to check that the sequence~$(t^\gamma, x_t)_{t\geq 0}$ satisfies~\cite[Hypothesis~2.2]{caravenna15},
so that the corollary follows from~\cite[Theorem~2.3]{caravenna15}:
\begin{align*}
\sigma_t^2(x_t)
&\sim\frac{2xt^\alpha}{t}
\left(\sqrt{\frac{\ccm}{2\ggm}\frac{x^{2\ggm-1}}{t^{\gamma+\alpha}}} - \sqrt{\frac{\ccm}{2\ggm}\frac{x^{2\ggm-1}}{t^{\gamma+\alpha}}-1}\right)^2
\sim\frac{2x}{t^{1-\alpha}}
\left(\sqrt{\frac{\ccm}{2\ggm}\frac{x^{2\ggm-1}}{t^{\gamma+\alpha}}} + \sqrt{\frac{\ccm}{2\ggm}\frac{x^{2\ggm-1}}{t^{\gamma+\alpha}}-1}\right)^{-2}
&
\sim \frac{\ggm x^{2(1-\ggm)}}{\ccm t^{\widehat{\gamma}}}.
\end{align*}
\end{proof}
This result can actually be improved slightly, as follows:
\begin{corollary}\label{coro: MOTM}
Under Assumption~\ref{Assu:V0},
for any slowly varying (at zero) function~$s:\RR_+^*\to\RR_+^*$ and $\gamma \in \left(0,\ggm\right)$,
let~$\alpha:=\frac{1}{2}(1 - \gamma / \ggm)$ and $x_t:=t^\alpha s(t)$. 
Then
$$
\PP\left(X_t\geq x_t\right) = \exp\left(-\frac{\LLm\left( s(t) \right)}{t^\gamma}\left(1+o(1)\right)\right).
$$
\end{corollary}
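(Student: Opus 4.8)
The plan is to reinterpret the statement as a G\"artner--Ellis large deviations principle for the rescaled family $Z_t:=X^{(\alpha)}_t/s(t)$ at the strike-adjusted speed $h(t):=t^{\gamma}s(t)^{-2\ggm}$, and then to evaluate that principle on the fixed half-line $[1,\infty)$. The preliminary bookkeeping is straightforward: $\PP(X_t\geq x_t)=\PP\big(X^{(\alpha)}_t\geq s(t)\big)=\PP(Z_t\geq 1)$; the speed $h(t)$ vanishes as $t\downarrow 0$ since a function slowly varying at the origin grows (or decays) more slowly than any power of $1/t$; and $\LLm(s(t))=\LLm(1)\,s(t)^{2\ggm}$ by homogeneity of $\LLm$ in~\eqref{eq:I}, so the right-hand side of the corollary equals $\exp\!\big(-\LLm(1)\,h(t)^{-1}(1+o(1))\big)$. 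It is therefore enough to show that $Z\sim\LDP(h(t),\LLm)$.

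The heart of the argument is the limiting rescaled cumulant generating function of $Z$. Fix $v\in\RR$ and set $w_t:=v\,s(t)^{2\ggm-1}$, which is again slowly varying at zero because $2\ggm>1$. Unwinding the scalings and using~\eqref{eq:TowerProperty},
$$
h(t)\log\EE\!\left[\exp\!\left(\frac{v\,Z_t}{h(t)}\right)\right]
=\frac{t^{\gamma}}{s(t)^{2\ggm}}\log\M\!\left(t,\frac{w_t}{t^{\gamma+\alpha}}\right)
=\frac{1}{s(t)^{2\ggm}}\,\Lambda^{(\alpha)}_{\gamma}\!\left(t,\frac{w_t}{t^{\gamma}}\right).
$$
Since $\gamma<\ggm<1$ forces $\gamma+\alpha<1$ (exactly as in the proof of Theorem~\ref{thm: ldp_ThinTail_rescaled}) and $w_t$ diverges, if at all, more slowly than any power of $1/t$, the computation in the proof of Theorem~\ref{thm: ldp_ThinTail_rescaled} in the regime $\gamma+\alpha<1$ goes through with the constant $u$ replaced by the slowly varying quantity $w_t$: the $\Crm$ and $\Drm$ asymptotics of Lemma~\ref{lemma:CDasymp} are uniform over such windows, whence $\Crm(t,w_t/t^{\gamma+\alpha})$ is of lower order while $\Drm(t,w_t/t^{\gamma+\alpha})$ is asymptotic to a positive multiple of $w_t^2\,t^{1-2(\gamma+\alpha)}$ and tends to infinity; and under Assumption~\ref{Assu:V0}, where the tail exponent $\log f$ is asymptotically a pure power, the slowly varying factor $\psi$ of Lemma~\ref{lemma: log-mgf-asymptotics} is constant and $\log\M_{\Vv}(z)\sim c_\star z^{\ggp}$ at infinity, which one applies at $z=\Drm(t,w_t/t^{\gamma+\alpha})$. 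Simplifying the $t$-exponents exactly as in Theorem~\ref{thm: ldp_ThinTail_rescaled} (where they cancel) gives $\Lambda^{(\alpha)}_{\gamma}(t,w_t/t^{\gamma})\sim\frac{\ccp}{2\ggp}\,w_t^{2\ggp}$; since $(2\ggm-1)\,2\ggp=2\ggm$ one has $w_t^{2\ggp}=|v|^{2\ggp}s(t)^{2\ggm}$, and dividing by $s(t)^{2\ggm}$ yields
$$
\lim_{t\downarrow 0}\;h(t)\log\EE\!\left[\exp\!\left(\frac{v\,Z_t}{h(t)}\right)\right]=\frac{\ccp}{2\ggp}\,|v|^{2\ggp}=:\Psi(v).
$$

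It remains to apply the G\"artner--Ellis theorem. The function $\Psi(v)=\frac{\ccp}{2\ggp}|v|^{2\ggp}$ is finite on $\RR$, convex, lower semicontinuous, differentiable everywhere (including at the origin, since $2\ggp>1$) and hence essentially smooth, with $0$ in the interior of its domain; so \cite[Theorem~2.3.6]{dembo92} yields $Z\sim\LDP(h(t),\Psi^{*})$. A Legendre computation --- already implicit in the proof of Theorem~\ref{thm: ldp_ThinTail_rescaled}, case $\gamma+\alpha<1$ --- identifies $\Psi^{*}=\LLm$. Since $\LLm$ is continuous and strictly increasing on $\RR_+$, $\inf_{[1,\infty)}\LLm=\inf_{(1,\infty)}\LLm=\LLm(1)$, so the large deviations upper and lower bounds combine to give $\lim_{t\downarrow 0}h(t)\log\PP(Z_t\geq 1)=-\LLm(1)$, that is,
$$
\PP(X_t\geq x_t)=\PP(Z_t\geq 1)=\exp\!\left(-\frac{\LLm(1)}{h(t)}(1+o(1))\right)=\exp\!\left(-\frac{\LLm(s(t))}{t^{\gamma}}(1+o(1))\right).
$$
The step needing the most care --- and the one I expect to be the main obstacle --- is the uniformity assertion, namely that the $\Crm$/$\Drm$ asymptotics of Lemma~\ref{lemma:CDasymp}, proved for the fixed scaling $u/t^{\gamma+\alpha}$, persist when $u$ is replaced by the slowly varying $w_t$; this requires revisiting the error bounds there, using that $w_t\,t^{\gamma+\alpha}\to 0$ still holds and drives those expansions.
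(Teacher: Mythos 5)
Your proposal is correct and follows essentially the same route as the paper's proof: the rescaled family $X_t/(t^\alpha s(t))$ at speed $t^\gamma s(t)^{-2\ggm}$ (the paper's $t^\gamma q(t)$), the same absorption of the slowly varying factor into the argument of the cgf via Lemmas~\ref{lemma:CDasymp} and~\ref{lemma: log-mgf-asymptotics}, the same G\"artner--Ellis conclusion, and the same final identification $\LLm(1)/q(t)=\LLm(s(t))$. The uniformity caveat you flag at the end is real but is exactly the point the paper also passes over silently, and your exponent bookkeeping ($(2\ggm-1)2\ggp=2\ggm$, $\Psi^*=\LLm$) checks out.
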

\begin{proof}
The function $q:\RR_+^*\to\RR_+^*$ defined by $q(t):= s(t)^{-2\ggm}$
is slowly varying at zero,
and 
$\lim_{t\downarrow 0}t^\gamma q(t) 
= \lim_{t\downarrow 0} \left(t^{\frac{\gamma}{2\ggm}}/s(t)\right)^{2\ggm}
= 0.$
Notice that 
$\gamma+\alpha 
\in(1/2,1)$, so that $t=o\left(t^{\gamma+\alpha}q(t)s(t)\right)$, 
and Lemma~\ref{lemma:CDasymp} implies that the rescaled cgf of the process~$\left(X_t/(s(t)t^\alpha)\right)_{t\geq 0}$ is given by
\begin{align*}
t^\gamma q(t)\log\M\left(t,\frac{u}{t^{\gamma+\alpha}q(t)s(t)}\right)
&= t^\gamma q(t)\Crm\left(t,\frac{u}{t^{\gamma+\alpha}q(t)s(t)}\right) 
+ t^\gamma q(t)\log\M_{\Vv}\left(\Drm\left(t,\frac{u}{t^{\gamma+\alpha}q(t)s(t)}\right)\right)\\
&= \Oo\left(t^{1+2\gamma+\alpha}q(t)^2s(t)\right)
+ t^\gamma q(t)\log\M_\Vv\left(\frac{u^2 t(1+o(1))}{2 t^{2(\gamma+\alpha)} (q(t)s(t))^2}\right).
\end{align*}
Then from Lemma~\ref{lemma: log-mgf-asymptotics}, plugging in the expressions for~$\alpha$ 
and the function~$q$, the limit of the rescaled cgf reads
\begin{equation*}
\lim_{t\downarrow 0}t^\gamma q(t)\log\M \left(t,\frac{u}{t^{\alpha+\gamma}q(t)s(t)}\right) 
 = \frac{\ccp}{2\ggp} u^{2\ggp}
\lim_{t\downarrow 0} t^\gamma q(t) \left(\frac{1+o(1)}{t^{\gamma/\ggp}(q(t)s(t))^2}\right)^{\ggp}
 =  \frac{\ccp}{2\ggp} u^{2\ggp}.
\end{equation*}
The G\"artner-Ellis theorem implies that $\left(X_t/(s(t)t^\alpha)\right)\sim\LDP(t^\gamma q(t), \LLm)$,
with~$\LLm$ in~\eqref{eq:I}.
Consequently,
$$
-\inf_{x\in (1,\infty)}\LLm(x)
\leq \lim_{t\downarrow 0}t^\gamma q(t)\log\PP\left(X_t\geq x_t\right)
=\lim_{t\downarrow 0}t^\gamma q(t)\log\PP\left(\frac{X_t}{t^\alpha s(t)}\geq 1\right)
\leq-\inf_{x\in[1,\infty)}\LLm(x).
$$
The proof then follows by noticing that
$\displaystyle 
\frac{\LLm(1)}{q(t)} 
= \frac{\ccm}{2\ggm}s(t)^{2\ggm}
= \LLm\left( s(t) \right)$
for all $t>0$.
\end{proof}
\subsection{Fat-tail distribution}\label{sec:fat-tail}
The fat-tail distribution case yields some degeneracy, 
and forces us to analysis the asymptotic behaviour of the cumulant generating function in more details,
in particular using sharp large deviations techniques for the rescaled process $(\xhat_t)_{t\geq 0}$ 
defined by
$\xhat_t := g(t)^{-1} X_t$, for~$t>0$,
where the function~$g:\RR_+\to\RR_+$ satisfies
$g(t) = o(1)$ and $\sqrt{t} = o(g)$, as~$t$ tends to zero.
For any rescaling function $h(t) := \sqrt{t} / g(t)$ ($=o(1)$), 
denote the rescaled cumulant generating function as
$$
\lambhat_t(u) := h(t)\log\EE\left[\exp\left\{\frac{u}{h(t)}\xhat_t\right\}\right].
$$
We provide a full asymptotic expansion for the European call option price with a time-dependent log-strike
$x_t:=xg(t)$, for any fixed $x\neq 0$, 
and translate this into small-time asymptotic behaviour of the implied volatility~$\sigma_t(x_t)$.
We discuss the case where 
the initial randomisation satisfies Assumption~\ref{assu:fat-tail-2} with~$\omega=1$.
The case where~$\omega = 2$ can be processed in a similar fashion.
\begin{theorem}\label{thm:fat-tail-euro-call}
For any~$x\neq 0$, as~$t$ tends to zero, 
a European call option with strike~$x_t$ satisfies
\begin{equation}\label{eq:MDPasymp}
\EE\left(\E^{X_t} - \E^{x_t}\right)^+
= (1-\E^{x_t})^+ 
+ \exp\left(-\sqrt{\frac{2\m}{t}} |x_t| + \gamma_1 + x_t\right)
\frac{|x_t|^{|\gamma_0|-1}}{\Gamma(|\gamma_0|)(2\m)^{1-\gamma_0/2}}t^{1+\gamma_0/2}g(t)\left(1+o(1)\right).
\end{equation}
Moreover, the implied volatility satisfies
\begin{equation*}
\sigma_t^2(x_t) = \frac{|x_t|}{2\sqrt{2\m t}} + 
\mathrm{h}_1(x) + \mathrm{h}_2\log(t) + \frac{1}{4\m}\log(g(t)) + o(1),
\end{equation*}
where 
$$
\mathrm{h}_1(x) 
 := \frac{1}{8\m}\left\{x_t
- (2\gamma_0+1)\log |x_t|
+ \log\left(\frac{16\pi\E^{2\gamma_1}}{\Gamma(|\gamma_0|)^2}\right) 
- \left(|\gamma_0|+\frac{1}{2}\right)\log(2\m) \right\}
\quad\text{and}\quad
\mathrm{h}_2  := \frac{1}{8\m}\left(\frac{1}{2} - |\gamma_0|\right).
$$
Furthermore, under Assumption~\ref{assu:fat-tail-2},
$X^g\sim\LDP(h(t), \sqrt{2\m}|x|)$.
\end{theorem}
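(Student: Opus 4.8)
The plan is to deduce the large deviations principle for $(\xhat_t)_{t\ge0}$ directly from the European call price asymptotics already established in the first part of Theorem~\ref{thm:fat-tail-euro-call}, rather than re-deriving it via G\"artner--Ellis on $\lambhat_t$ (which would be delicate since the putative rate function $u\mapsto \sqrt{2\m}|x|$ evaluated through Legendre duality is not differentiable and the cgf limit degenerates in exactly the way flagged at the start of Section~\ref{sec:fat-tail}). First I would restrict to $x>0$; the case $x<0$ follows by the same argument applied to the put, or by the symmetry $\PP(X_t \le x_t)$, and the origin is handled trivially since $\sqrt{2\m}\cdot 0 = 0$. The first step is to read off from~\eqref{eq:MDPasymp}, after subtracting the intrinsic value $(1-\E^{x_t})^+$ (which vanishes for $x>0$ once $t$ is small enough that $x_t>0$), that
\begin{equation*}
h(t)\log\EE\left(\E^{X_t}-\E^{x_t}\right)^+ = -\sqrt{2\m}\,x\,\frac{g(t)}{\sqrt{t}}\,h(t) + O\!\left(h(t)\log(1/t)\right) \longrightarrow -\sqrt{2\m}\,x,
\end{equation*}
using $g(t)/\sqrt t = 1/h(t)$ and $h(t)\log(1/t)\to 0$ (which holds because $\sqrt t = o(g)$ forces $h=o(1)$, and one needs the mild extra regularity $h(t)\log(1/t)\to0$, true for any polynomially-comparable choice of $g$; I would either assume this or note the logarithmic terms in~\eqref{eq:MDPasymp} are subsumed into the $o(1)$ at the level of $h(t)\log(\cdot)$).

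The second step is to convert this call-price asymptotic into a tail-probability asymptotic for $\xhat_t = X_t/g(t)$. The elementary bounds
\begin{equation*}
\E^{x_t}\PP(X_t\ge x_t + 1) \le \EE\left(\E^{X_t}-\E^{x_t}\right)^+ \le \E^{x_t}\,\EE\left[(\E^{X_t-x_t}-1)\ind_{\{X_t\ge x_t\}}\right],
\end{equation*}
together with an exponential Chebyshev-type estimate on the right-hand side using the cgf asymptotics of Lemma~\ref{lemma:CDasymp} and Assumption~\ref{assu:fat-tail-2}, give matching logarithmic asymptotics: $h(t)\log\PP(\xhat_t \ge x) \to -\sqrt{2\m}\,x$ for every $x>0$. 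The point is that multiplication by $\E^{x_t} = \E^{xg(t)}$ contributes $x g(t) h(t) = x\sqrt t \to 0$ at the relevant scale, and the shift by $1$ in $X_t\ge x_t+1$ contributes $h(t)/g(t)$-type corrections that likewise vanish, so the call price and the tail probability share the same exponential rate $\sqrt{2\m}$. From the pointwise limit of $x\mapsto h(t)\log\PP(\xhat_t\ge x)$ on $\RR_+^*$, and the analogous statement on $\RR_-^*$, one reconstructs the full LDP with good rate function $I(x) = \sqrt{2\m}|x|$ by a standard argument: the rate function of a real-valued family is determined by its one-sided tail rates, $I$ is continuous and has compact sublevel sets, and the upper/lower bounds for general Borel sets follow from monotonicity of the tails (this is exactly the mechanism used, e.g., in deducing~\eqref{eq:MOTM}-type statements, and one could also cite~\cite[Theorem~2.3]{caravenna15} or a direct Varadhan-type lemma).

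The main obstacle is the control of the $o(1)$ and logarithmic terms in~\eqref{eq:MDPasymp} at the coarser scale $h(t)$: one must check that $h(t)$ times everything except the leading $-\sqrt{2\m/t}\,|x_t| = -\sqrt{2\m}\,x\,g(t)/\sqrt t$ term genuinely tends to zero. Concretely, $h(t)\cdot\gamma_1\to0$ and $h(t)\cdot x_t\to0$ are immediate, but $h(t)\log(t^{1+\gamma_0/2}g(t))$ and $h(t)\log(|x_t|^{|\gamma_0|-1})$ require $h(t)\log(1/t)\to0$ and $h(t)\log(1/g(t))\to0$; the latter follows from the former since $g(t)$ is sandwiched between $\sqrt t$ and $1$. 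I would therefore either add the harmless standing assumption $\lim_{t\downarrow0} h(t)\log(1/t)=0$ (satisfied by all natural choices such as $g(t)=t^\beta$ with $\beta\in(0,1/2)$, giving $h(t)=t^{1/2-\beta}$), or remark that the LDP is stated for such $g$. Everything else is bookkeeping: the upper-bound Chebyshev estimate uses only that $\Drm(t,u/h(t)g(t))$ stays in the domain of $\M_\Vv$ for $u$ small, which Lemma~\ref{lemma:CDasymp} guarantees, and the lower bound is the trivial half of the call-price sandwich.
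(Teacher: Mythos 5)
The statement you were asked to prove is the whole of Theorem~\ref{thm:fat-tail-euro-call}, and your proposal only addresses its last sentence. You explicitly take the call-price expansion~\eqref{eq:MDPasymp} as ``already established'' and deduce the LDP from it; but~\eqref{eq:MDPasymp} and the implied-volatility expansion \emph{are} the substance of the theorem, and proving them is where essentially all the work lies. The paper's proof is devoted to exactly that: it solves the saddle-point equation $\partial_u\lambhat_t(u)=x$ to obtain $u_t^*(x)=\sgn(x)\sqrt{2\m}-\frac{|\gamma_0|}{x}h(t)+\Oo(h(t)^2+\sqrt{t})$, performs the measure change $\D\QQ_t/\D\QQ=\exp\{(u_t^*(x)\xhat_t-\lambhat_t(u_t^*(x)))/h(t)\}$, identifies the limiting characteristic function of $Z_t=\xhat_t-x$ under~$\QQ_t$ via Lemma~\ref{lemma: cf-of-z} (a Gamma-type limit), evaluates the remaining expectation by Fourier inversion, and then obtains the implied-volatility expansion from~\cite[Corollary~7.2]{gao14}. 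None of this appears in your proposal, so as a proof of the theorem it has a genuine, structural gap: the first two (and principal) claims are assumed rather than proved.

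For the part you do address, the strategy of extracting the LDP from sharp tail/price asymptotics rather than from G\"artner--Ellis (whose steepness hypothesis degenerates here) is sound in spirit and consistent with how the fat-tail case must be handled; your observation that one needs $h(t)\log(1/t)\to0$ to absorb the polynomial prefactors of~\eqref{eq:MDPasymp} at scale $h(t)$ is a correct and genuinely necessary technical point. However, your upper-bound sandwich is miscalibrated: the inequality $\E^{x_t}\PP(X_t\ge x_t+1)\le\EE(\E^{X_t}-\E^{x_t})^+$ controls $\PP(\xhat_t\ge x+1/g(t))$, and since $1/g(t)\to\infty$ this says nothing about $\PP(\xhat_t\ge x)$; likewise your claim that the shift ``contributes $h(t)/g(t)$-type corrections that vanish'' fails for, e.g., $g(t)=t^{\beta}$ with $\beta\in[1/4,1/2)$. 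The correct comparison shifts the strike at the scale of $g(t)$: bound $\PP(\xhat_t\ge x)$ by the call at strike $(x-\eps)g(t)$ divided by $\E^{xg(t)}-\E^{(x-\eps)g(t)}$, and let $\eps\downarrow0$ using continuity of $x\mapsto\sqrt{2\m}|x|$. This is repairable, but the main defect remains that the theorem's principal assertions are presupposed, not proved.
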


\begin{proof}[Proof of Theorem~\ref{thm:fat-tail-euro-call}]
The proof is close to that of~\cite[Theorem 4.10]{JS16}, so that we only sketch the highlights.
Notice that~$\sqrt{t} = o(h(t))$.
Following similar steps to~\cite[Lemma~D.1]{JS16}, it is easy to show that
for any $x\neq 0$ and small~$t>0$,
the equation~$\partial_u\lambhat_t(u) = x$ admits a unique solution~$u_t^*(x)$ satisfying
$u_t^*(x) = \sgn (x)\sqrt{2\m} -\frac{|\gamma_0|}{x}h(t) + \Oo\left(h(t)^2 + \sqrt{t}\right)$.
Then as~$t$ tends to zero, direct computations yield
\begin{equation*}
\exp\left\{\frac{-xu_t^*(x) + \lambhat_t(u_t^*(x))}{h(t)}\right\}
=\exp\left\{-\sqrt{\frac{2\m}{t}} |x_t| - \gamma_0+\gamma_1\right\}\left(\frac{|\gamma_0|\sqrt{2\m t}}{|x_t|}\right)^{\gamma_0}(1+o(1)).
\end{equation*}
For fixed~$x\neq 0$ and small $t>0$, 
define the time-dependent measure~$\QQ_t$ by
$$
\frac{\D \QQ_t}{\D \QQ}:=\exp\left\{\frac{u_t^*(x)\xhat_t - \lambhat_t(u_t^*(x))}{h(t)}\right\},
$$
so that, for~$x>0$, 
\begin{align}\label{eq:mdp_pricing}
\EE\left(\E^{X_t} - \E^{x_t}\right)^+
&= \EE^{\QQ_t}\left[\E^{x_t}\left(\E^{g(t)(\xhat_t - x)}-1\right)^+\frac{\D \QQ}{\D \QQ_t}\right]\\
&= \exp\left\{\frac{-xu_t^*(x) + \lambhat_t(u_t^*(x))}{h(t)} \right\}\E^{x_t}
\EE^{\QQ_t}\left[\exp\left(\frac{-u_t^*(x)Z_t}{h(t)}\right)\left(\E^{g(t)Z_t}-1\right)^+\right],\nonumber
\end{align}
with $Z_t:=\xhat_t - x$.
From Lemma~\ref{lemma: cf-of-z},  under the measure~$\QQ_t$,
the characteristic function of~$Z_t$ satisfies
$$
\Psi_t(u):=\EE^{\QQ_t}[\E^{\I uZ_t}] = \E^{-\I u x}\left(1-\frac{\I ux}{|\gamma_0|}\right)^{\gamma_0}\left(1+o(1)\right),\quad\text{as }t\text{ tends to zero}.
$$
By Fourier inversion, we can therefore write, for small~$t>0$,
\begin{equation*}
\EE^{\QQ_t}\left[\exp\left(\frac{-u_t^*(x)Z_t}{h(t)}\right)\left(\E^{g(t)Z_t}-1\right)^+\right]
= \frac{t}{2\pi}\int_{-\infty}^{\infty}\frac{\Psi_t(u)\D u}{(u_t^*(x) + (\I u - g(t))h(t) )(u_t^*(x) + \I uh(t))}
= \frac{t f_\Gamma(x)}{2\m}(1+o(1)),
\end{equation*}
where 
$f_{\Gamma}(y):= \frac{y^{|\gamma_0|-1}}{\Gamma(|\gamma_0|)}\exp(-|\frac{\gamma_0}{x}|y)(|\frac{\gamma_0}{x}|)^{|\gamma_0|}$, 
for $y>0$.
The result then follows directly by plugging this back into~\eqref{eq:mdp_pricing}. 
The case where~$x<0$ follows by Put-Call parity.
Finally, a direct application of~\cite[Corollary~7.2]{gao14} yields the asymptotics for the implied volatility.
\end{proof}


\appendix
\section{Useful results}\label{app:heston}
We recall the the following small-time expansion of the (rescaled) functions~$\Crm$ and~$\Drm$ 
from~\cite[Appendix C]{JS16}:
\begin{lemma}\label{lemma:CDasymp}
The following asymptotic behaviour as~$t$ tends to zero:
\item 
\begin{equation*}
\Crm\left(t,\frac{u}{h(t)}\right) 
= 
\left\{
\begin{array}{lll}
\text{undefined}, & u\neq 0, & \text{if }h(t)= o(t),\\
\displaystyle 
\Oo(1), & u\in(u_-, u_+),
& \text{if }h(t) = t+\Oo(t^2),\\
\displaystyle \Oo\left(th(t) + h^3(t)\right)
 + \frac{\kappa\theta u^2}{4}\left(\frac{t}{h(t)}\right)^{2}
 \left[1 + \Oo\left(h(t) + \frac{t}{h(t)}\right)\right],  &  u\in \RR,
 & \text{if }t=o(h(t));
\end{array}
\right.
\end{equation*}
\begin{equation*}
\Drm\left(t,\frac{u}{h(t)}\right) 
= 
\left\{
\begin{array}{lll}
0, & \text{if } u = 0, & \text{for any function }h,\\
\text{undefined}, &  u\neq 0, & \text{if } h(t)=o(t),\\
\displaystyle t^{-1}\Lambda(u) + \Oo(1), & u\in(u_-, u_+), 
& \text{if }h(t) = t + \Oo(t^2),\\
\displaystyle \frac{u^2 t}{2h^{2}(t)} \left[1 -\frac{h(t)}{u} + \frac{\rho\xi ut}{2h(t)}
 + \Oo\left(t + h^2(t) + \frac{t^2}{h^2(t)}\right) \right],  &u\in \RR, & \text{if } t=o(h(t)).
\end{array}
\right.
\end{equation*}
\end{lemma}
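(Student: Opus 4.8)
The plan is to substitute the argument $u/h(t)$ into the explicit formulas \eqref{eq:HestonMGF} and Taylor-expand the three building blocks $d(u/h(t))$, $g(u/h(t))$, $\E^{-d(u/h(t))t}$ in each scaling regime, then recombine. The case $u=0$ is immediate: $d(0)=\kappa$, $g(0)=0$, so $\Drm(t,0)=0$ and $\Crm(t,0)=0$ for every $t$ and every $h$. The expansions rest on the identity $d(z)^2=-\rrho^2\xi^2z^2+(\xi^2-2\kappa\rho\xi)z+\kappa^2$; for $|z|\to\infty$ the right-hand side is eventually negative, so the binomial expansion of the square root gives
\begin{equation*}
d(z)=\I\rrho\xi z\,\sgn(u)+\I\tfrac{2\kappa\rho-\xi}{2\rrho}\,\sgn(u)+\Oo(1/z),
\qquad
g(z)=g_0+\Oo(1/z),
\qquad
g_0:=\frac{\rho+\I\rrho\,\sgn(u)}{\rho-\I\rrho\,\sgn(u)},
\end{equation*}
with $|g_0|=1$ and $1-g_0=\tfrac{-2\I\rrho\,\sgn(u)}{\rho-\I\rrho\,\sgn(u)}\neq0$ (we take $\rho\in(-1,1)$; $\rho=\pm1$ degenerates and is handled directly). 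At the classical scaling $z=u/t$ these reduce to the expansions of \cite{forde12, fordeJac}.

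For $h(t)=o(t)$ and $u\neq0$ the argument $z=u/h(t)$ grows strictly faster than $1/t$. The moment generating function of the standard Heston model is finite only on a bounded interval of critical moments whose endpoints behave like $u_\pm/t+\Oo(1)$ as $t\downarrow0$ (the small-time moment-explosion fact underlying \cite{JS16}), outside of which the denominator $1-g(\cdot)\E^{-d(\cdot)t}$ of $\Drm$ passes through a zero; since $|u/h(t)|\gg1/t$, the argument overshoots this interval as $t\downarrow0$, so $\Crm$ and $\Drm$ are undefined there. For $h(t)=t+\Oo(t^2)$, write $u/h(t)=(u/t)(1+\Oo(t))$, so $d(u/h(t))t=\I\rrho\xi|u|+\Oo(t)$, $g(u/h(t))=g_0+\Oo(t)$, $\E^{-d(u/h(t))t}=\E^{-\I\rrho\xi|u|}(1+\Oo(t))$. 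In $\Drm$ the prefactor is $-\tfrac{u}{\xi t}(\rho+\I\rrho\sgn(u))+\Oo(1)$ and $\tfrac{1-\E^{-dt}}{1-g\E^{-dt}}\to\tfrac{1-\E^{-\I\rrho\xi|u|}}{1-g_0\E^{-\I\rrho\xi|u|}}$, whose denominator is non-zero exactly on $(u_-,u_+)$ (the endpoints $u_\pm$ of \eqref{eq:lambda_H} being its first zeros); a half-angle reduction (as in \cite{fordeJac}) shows the product equals $\Lambda(u)/t$, giving $\Drm(t,u/h(t))=\Lambda(u)/t+\Oo(1)$; similarly $(\kappa-\rho\xi u/h(t)-d(u/h(t)))t=\Oo(1)$ and $\log\tfrac{1-g\E^{-dt}}{1-g}=\Oo(1)$, so $\Crm(t,u/h(t))=\Oo(1)$.

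For $t=o(h(t))$, put $z=u/h(t)\to\infty$; now $d(z)t=\I\rrho\xi|u|\tfrac{t}{h(t)}+\Oo(t)$, hence $\E^{-d(z)t}=1-\I\rrho\xi|u|\tfrac{t}{h(t)}+\Oo(t+t^2/h^2(t))$ and $g(z)=g_0+\Oo(h(t))$. In $\Drm$, the numerator $1-\E^{-d(z)t}$ has size $t/h(t)$, the prefactor $(\kappa-\rho\xi z-d(z))/\xi^2=-\tfrac{z}{\xi}(\rho+\I\rrho\sgn(u))+\Oo(1)$ has size $z$, and $1-g(z)\E^{-d(z)t}\to1-g_0\neq0$; multiplying and using $(\rho+\I\rrho\sgn(u))(\rho-\I\rrho\sgn(u))=1$ together with the value of $1-g_0$, the leading term collapses to $\tfrac{zut}{2h(t)}=\tfrac{u^2t}{2h^2(t)}$, and carrying one more order in each factor yields the bracket $1-h(t)/u+\rho\xi ut/(2h(t))+\Oo(\cdots)$. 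For $\Crm$, the summands $(\kappa-\rho\xi z-d(z))t$ and $-2\log\tfrac{1-g(z)\E^{-d(z)t}}{1-g(z)}$ are each of size $t/h(t)$, with coefficients $-\xi u(\rho+\I\rrho\sgn(u))$ and $-\tfrac{2g_0\I\rrho\xi u\sgn(u)}{1-g_0}$; these cancel identically because $g_0(\rho-\I\rrho\sgn(u))=\rho+\I\rrho\sgn(u)$, so the first surviving term is of size $(t/h(t))^2$, arising from the square of the $\I\rrho\xi|u|t/h(t)$ term inside the logarithm, and equals $\tfrac{\kappa\theta}{\xi^2}\cdot\tfrac{g_0}{(1-g_0)^2}(-\rrho^2\xi^2u^2)(t/h(t))^2=\tfrac{\kappa\theta u^2}{4}(t/h(t))^2$, using $\tfrac{g_0}{(1-g_0)^2}=-\tfrac{1}{4\rrho^2}$; the remaining contributions (the $\kappa t$ subterm, the $\Oo(1/z)$ corrections to $d$ and $g$, the higher-order terms of the exponential and logarithm) are collected into the $\Oo(th(t)+h^3(t))$ and the relative error $\Oo(h(t)+t/h(t))$.

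The hard part will be precisely this cancellation of the leading $t/h(t)$-contributions to $\Crm$ in the last regime: an order-of-magnitude bound would predict only $\Crm(t,u/h(t))=\Oo(t/h(t))$, and it is the exact value $g_0=\tfrac{\rho+\I\rrho\sgn(u)}{\rho-\I\rrho\sgn(u)}$ that forces the descent to order $(t/h(t))^2$ and pins the coefficient $\kappa\theta u^2/4$; the same algebraic identities furnish the sub-leading bracket of $\Drm$, while the only non-computational ingredient is the appeal to the small-time Heston moment explosion used to justify the ``undefined'' regime. Everything else is a careful bookkeeping of Taylor coefficients.
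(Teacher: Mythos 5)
The paper does not actually prove this lemma: it is recalled verbatim from~\cite[Appendix~C]{JS16}, so there is no in-paper argument to compare with; your route (substituting $u/h(t)$ into the closed-form~\eqref{eq:HestonMGF} and expanding $d$, $g$, $\E^{-dt}$ regime by regime) is the natural one and is the method of that reference. Your leading-order algebra is sound: $g_0=\frac{\rho+\I\rrho\,\sgn(u)}{\rho-\I\rrho\,\sgn(u)}$, the cancellation of the $t/h(t)$-terms in $\Crm$, the identity $\frac{g_0}{(1-g_0)^2}=-\frac{1}{4\rrho^2}$ producing the coefficient $\frac{\kappa\theta u^2}{4}$, and the collapse of the leading term of $\Drm$ to $\frac{u^2t}{2h^2(t)}$ all check out, and invoking the Heston small-time moment explosion for the ``undefined'' regime is the right justification.

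There is, however, a genuine gap in your final bookkeeping for $\Crm$ in the regime $t=o(h(t))$. Beyond the $t/h(t)$-cancellation you rightly single out, each of the two summands of $\Crm$ also carries a contribution of exact order $t$: the term $(\kappa-d_1)t$ from $(\kappa-\rho\xi z-d(z))t$, and, inside the logarithm, $\frac{g_0 d_1}{1-g_0}t$ plus the product of the $\Oo(1/z)$ correction of $g$ with the leading part of $1-\E^{-d(z)t}$. An uncancelled $c\,t$ term is \emph{not} $\Oo\left(th(t)+h^3(t)\right)$, nor is it absorbed by $\left(t/h(t)\right)^2\Oo\left(h(t)+t/h(t)\right)$, whenever $h(t)\asymp t^\beta$ with $\beta\in(1/3,2/3)$ --- a range that contains the exponents $\gamma+\alpha\in(1/2,1)$ actually used in Theorems~\ref{thm: ldp_bdd_rescaled} and~\ref{thm: ldp_ThinTail_rescaled} --- so if these terms survived, the lemma as stated would be false. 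They do cancel, but this is a second exact cancellation that must be exhibited: writing $d(z)=\I\rrho\xi|z|+d_1+\Oo(1/z)$, $g(z)=g_0+c_g/z+\Oo(1/z^2)$ and $a:=\I\rrho\xi|u|$, one needs $(\kappa-d_1)-\frac{2g_0d_1}{1-g_0}-\frac{2c_g a}{u(1-g_0)^2}=0$, which indeed holds by the same algebra ($(\rho-\I\rrho\,\sgn(u))^2=-(\rrho+\I\rho\,\sgn(u))^2$, etc.), but is not implied by your statement that these pieces ``are collected into the $\Oo(th(t)+h^3(t))$''. A smaller instance of the same issue is the sub-leading bracket $1-h(t)/u+\rho\xi ut/(2h(t))$ in $\Drm$, which you assert rather than derive; it requires precisely these second-order coefficients of $d$ and $g$, although only the leading term of $\Drm$ is used downstream in the paper.
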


We also recall the following lemma:
\begin{lemma}\label{lemma: cf-of-z}[Lemma D.3 in~\cite{JS16}]
For any~$x\neq 0$, let~$Z_t:=(X_t-x)/\vartheta(t)$, where~
$\vartheta(t):=\ind_{\{\omega=1\}} + \ind_{\{\omega= 2\}}t^{1/8}$. 
Under Assumption~\ref{assu:fat-tail-2}, as~$t$ tends to zero, 
the characteristic function of~$Z_t$ under measure~$\QQ_t$ is
\begin{equation*}
\Psi_t(u) := \EE^{\QQ_t}\left(\E^{\I u Z_t}\right)
= 
\left\{
\begin{array}{ll}
\displaystyle\E^{-\I u x}\left(1 - \frac{\I u x}{|\gamma_0|}\right)^{\gamma_0}\left(1+o(1)\right), &\text{for }\omega =1,\\
\displaystyle \exp\left(\frac{-u^2\zeta^2(x)}{2}\right)\left(1+o(1) \right),&\text{for }\omega = 2,
\end{array}
\right.
\end{equation*}
where~$\zeta(x) := \displaystyle\sqrt{2}\left(\frac{2\m}{\gamma_0^2}\right)^{1/8}|x|^{3/4}$.
\end{lemma}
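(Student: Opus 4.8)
The plan is to use that $\QQ_t$ is an exponential (Esscher) tilt of the base measure, with density $\exp\{(u_t^*(x)\xhat_t-\lambhat_t(u_t^*(x)))/h(t)\}$, so that the characteristic function of $Z_t=(\xhat_t-x)/\vartheta(t)$ reduces to a shifted ratio of Laplace transforms. A direct computation, using $\xhat_t/h(t)=X_t/\sqrt t$ and the definition of $\lambhat_t$, gives
$$
\log\Psi_t(u)=-\frac{\I u x}{\vartheta(t)}+\frac{1}{h(t)}\left[\lambhat_t\!\left(u_t^*(x)+\frac{\I u h(t)}{\vartheta(t)}\right)-\lambhat_t\!\left(u_t^*(x)\right)\right].
$$
I would then plug in the tower-property decomposition $\lambhat_t(v)=h(t)\bigl[\Crm(t,v/\sqrt t)+\log\M_\Vv(\Drm(t,v/\sqrt t))\bigr]$ from~\eqref{eq:TowerProperty}, and invoke Lemma~\ref{lemma:CDasymp}: in the relevant case it shows that the $\Crm$-contribution and its increment are negligible, that $a\mapsto\Drm(t,a/\sqrt t)$ agrees to leading order with $a\mapsto a^2/2$, and that the saddle point $u_t^*(x)$ is driven to the value at which $\Drm(t,u_t^*(x)/\sqrt t)\uparrow\m$; write $\delta_t:=\m-\Drm(t,u_t^*(x)/\sqrt t)=o(1)$. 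Thus, up to negligible terms, $h(t)^{-1}$ times the bracket becomes $\log\M_\Vv\bigl(\Drm(t,(u_t^*(x)+\I u h(t)/\vartheta(t))/\sqrt t)\bigr)-\log\M_\Vv\bigl(\Drm(t,u_t^*(x)/\sqrt t)\bigr)$, with both arguments of $\M_\Vv$ near~$\m$.

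For $\omega=1$ (so $\vartheta(t)\equiv1$), I would substitute the near-$\m$ expansion $\log\M_\Vv(w)=\gamma_0\log(\m-w)+\gamma_1+o(1)$ of~\eqref{eq:fat-tail assu2}: the constant $\gamma_1$ cancels, and what remains is $\gamma_0\log\bigl(\bigl(\m-\Drm(t,(u_t^*(x)+\I u h(t))/\sqrt t)\bigr)/\delta_t\bigr)$. Using the quadratic shape of $\Drm$ one finds $\m-\Drm(t,(u_t^*(x)+\I u h(t))/\sqrt t)=\delta_t-\I u h(t)\,u_t^*(x)+o(\delta_t)$, and the saddle-point identity $\partial_u\lambhat_t(u_t^*(x))=x$ together with~\eqref{eq:fat-tail assu3} forces $h(t)\,u_t^*(x)/\delta_t\to x/|\gamma_0|$; hence $\log\Psi_t(u)\to-\I u x+\gamma_0\log(1-\I u x/|\gamma_0|)$, which is the claimed Gamma-type limit.

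For $\omega=2$ the dominant term of $\log\M_\Vv$ is the simple pole $\gamma_0/(\m-w)$, so I would instead expand $\bigl(\m-\Drm(t,(u_t^*(x)+\I u h(t)/\vartheta(t))/\sqrt t)\bigr)^{-1}-\delta_t^{-1}$ in powers of $h(t)/(\vartheta(t)\delta_t)$. The linear-in-$u$ term of this expansion, multiplied by $\gamma_0 h(t)^{-1}$, equals $\I u x/\vartheta(t)$ by the saddle-point identity and hence cancels the recentring term $-\I u x/\vartheta(t)$; matching the orders in $t$ of $\delta_t$, $\partial_u\Drm$ and $\partial_u^2\Drm$ (from Lemma~\ref{lemma:CDasymp} and~\eqref{eq:fat-tail assu3}) then singles out $\vartheta(t)=t^{1/8}$ as the unique scaling for which the surviving quadratic-in-$u$ term converges to a finite constant, namely $-u^2\zeta(x)^2/2$ with $\zeta(x)=\sqrt2\,(2\m/\gamma_0^2)^{1/8}|x|^{3/4}$, the cubic and higher terms vanishing; this yields the Gaussian limit. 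The hard part throughout is the joint bookkeeping of the infinitesimals $t$, $h(t)$, $\delta_t$ and $\vartheta(t)$: one must verify that the linear-in-$u$ contribution is exactly cancelled or survives as claimed, that precisely one further term survives in each case, and that every remainder---including the $o(1)$ and $\Oo(\m-w)$ errors of Assumption~\ref{assu:fat-tail-2} and the error terms of Lemma~\ref{lemma:CDasymp}, now evaluated at complex arguments---is $o(h(t))$ after division by $h(t)$; the $\omega=2$ regime is the delicate one, since the pole amplifies these errors. The argument follows the lines of~\cite[Lemma~D.3]{JS16}.
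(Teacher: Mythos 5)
The paper never proves this lemma: it is recalled verbatim from~\cite[Lemma~D.3]{JS16}, so there is no in-paper argument to compare against, and your sketch is essentially a reconstruction of that reference's proof. The route is the right one: the exponential-tilt identity $\log\Psi_t(u)=-\I ux/\vartheta(t)+h(t)^{-1}[\lambhat_t(u_t^*(x)+\I uh(t)/\vartheta(t))-\lambhat_t(u_t^*(x))]$, the tower-property split into a negligible $\Crm$ increment and a $\log\M_\Vv\circ\Drm$ increment, the near-$\m$ expansions of Assumption~\ref{assu:fat-tail-2}, and the saddle-point identity to handle the recentring term. The leading-order algebra checks out: for $\omega=1$, $h(t)u_t^*(x)/\delta_t\to x/|\gamma_0|$ indeed gives $\log\Psi_t(u)\to-\I ux+\gamma_0\log(1-\I ux/|\gamma_0|)$; for $\omega=2$, with $\delta_t\sim(\gamma_0\sqrt{2\m}\,h(t)/|x|)^{1/2}$ the surviving term $-u^2h(t)\gamma_0(u_t^*)^2/(\vartheta(t)^2\delta_t^{3})$ does converge to $-u^2\zeta(x)^2/2$, the competing quadratic term $\propto h(t)/(\vartheta(t)^2\delta_t^{2})$ and all cubic and higher terms vanishing, as you assert.

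Two points, however, are genuine gaps rather than routine bookkeeping, and you flag them without closing them. First, Assumption~\ref{assu:fat-tail-2} gives the expansions of $\log\M_\Vv$ and $\M_\Vv'/\M_\Vv$ only for \emph{real} $u\uparrow\m$, whereas your computation substitutes them at the complex points $\Drm\bigl(t,(u_t^*(x)+\I uh(t)/\vartheta(t))/\sqrt{t}\bigr)$; without an additional argument (analyticity of $\M_\Vv$ near $\m$ plus a uniform, sectorial version of the expansion, or explicit remainder control on a complex neighbourhood of $\m$), the step ``plug in the near-$\m$ expansion'' is not justified, and this is precisely where a proof could fail --- most delicately for $\omega=2$, where the pole amplifies the $o(1)$ and $\Oo(\m-u)$ errors and one must check they remain negligible after division by $\delta_t$-powers. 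Second, a scaling caveat: the balance that makes the quadratic term finite is $\vartheta(t)\asymp h(t)^{1/4}$, which equals $t^{1/8}$ only in the normalisation $h(t)=\sqrt{t}$ of~\cite{JS16} (i.e.\ $g\equiv1$), not for the general $h(t)=\sqrt{t}/g(t)$ used around Theorem~\ref{thm:fat-tail-euro-call}; so your claim that $t^{1/8}$ is ``the unique scaling'' is correct only in that setting (the statement as recalled here inherits the same looseness). With a hypothesis ensuring validity of~\eqref{eq:fat-tail assu2}--\eqref{eq:fat-tail assu3} along the relevant complex approach to $\m$, your sketch would fill in to a complete proof along the same lines as the cited one.
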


\end{document}